\RequirePackage{xcolor}
\documentclass[letterpaper,10pt,journal,twoside,print]{ieeecolor}
\usepackage{circuitikz}
\usepackage{lcsys}
\usepackage{cite}
\usepackage{mathtools,amsthm,amsmath,mathalfa,bbm,amssymb}
\usepackage{comment}
\usepackage{textcomp}
\makeatletter
\def\endfigure{\end@float}
\makeatother
\renewenvironment{proof}{
\noindent{\sc Proof.}\hspace{0.10cm}\,\,}{$\hfill\Box$\vspace{.1cm}}

\newcommand{\dd}{\textrm{d}}

\usepackage{algorithmic}
\usepackage{graphicx}
\usepackage{textcomp}
\usepackage{xspace}

\newcommand{\rline}{{\mathbb R}}

\newcommand{\sign}{\, \mathrm{sign}\,}

\newcommand{\rfb}[1]{\mbox{\rm
		(\eref{#1})}\ifx\undefined\stillediting\else:\fbox{$#1$}\fi}

\newcommand{\bluff}{{\hbox{\raise 15pt \hbox{\hskip 0.5pt}}}}

\newfont{\roma}{cmr10 scaled 1200}

\newtheorem{thm}{Theorem}[section]

\newtheorem{prop}[thm]{Proposition}

\newtheorem{defn}{Definition}[section]

\def\BibTeX{{\rm B\kern-.05em{\sc i\kern-.025em b}\kern-.08em
    T\kern-.1667em\lower.7ex\hbox{E}\kern-.125emX}}
\begin{document}
\title{On Port-Hamiltonian Formulation of Hysteretic Energy Storage Elements: The Backlash Case}
\author{J.R. Keulen, B. Jayawardhana, A.J. van der Schaft  
 \thanks{This research project is supported by the TKI (Topconsortia voor Kennis en Innovatie) grant 22.0027 under the Top Sector High-Tech Systems and Materials (HTSM).}
 \thanks{J.R. Keulen and B. Jayawardhana are with the Engineering and Technology Institute Groningen, Faculty of Science and Engineering, University of Groningen, 9747AG Groningen, The Netherlands (e-mails: j.r.keulen@rug.nl, b.jayawardhana@rug.nl)}
 \thanks{A.J. van der Schaft is with the Bernoulli Institute for Mathematics, Computer Science, and
Artificial Intelligence, University of Groningen, Nijenborgh 9, 9747 AG
Groningen, Netherlands (e-mail: a.j.van.der.schaft@rug.nl).}}

\maketitle
\thispagestyle{empty}
\begin{abstract}
This paper presents a port-Hamiltonian formulation of hysteretic energy storage elements. 
First, we revisit the passivity property of backlash-driven  storage elements by presenting a family of storage functions associated to the dissipativity property of such elements. We explicitly derive the corresponding available storage and required supply functions \`a la Willems \cite{willems1972dissipative}, and show the interlacing property of the aforementioned family of storage functions sandwiched between the available storage and required supply functions. 
Second, using the proposed family of storage functions, we present a port-Hamiltonian formulation of hysteretic inductors as prototypical storage elements in port-Hamiltonian systems. In particular, we show how a Hamiltonian function can be chosen from the family of storage functions and how the hysteretic elements can be expressed as port-Hamiltonian system with feedthrough term, where the feedthrough term represents energy dissipation. Correspondingly, we illustrate its applicability in describing an RLC circuit (in parallel and in series) containing a hysteretic inductor element. 
\end{abstract}

\begin{IEEEkeywords}
Hysteresis; Dissipative Systems; Backlash; Port-Hamiltonian Systems; Nonlinear RLC Circuits; Passivity.
\end{IEEEkeywords}

\section{Introduction}\label{sec:introduction}

\IEEEPARstart{M}{odern} high-precision technology increasingly operates at the nano- and microscale. At these scales, hysteresis is a fundamental limiting factor for system performance, stability and efficiency. Hysteresis is a phenomenon where the current state does not only depend on the current input, but also on the history of previous inputs. Such behavior is encountered in a wide spectrum of physical systems that contain ferromagnetic-, piezoelectric materials, magnetostrictive actuators, shape-memory alloys, etc. \cite{brokate2012hysteresis, mayergoyz2003mathematical}. In thermodynamics and in electro-mechanical systems, the presence of hysteresis loop is always associated to energy dissipation, where the enclosed area of a hysteresis loop 
represents the dissipated energy per cycle. Over the past decades, accurate modeling of hysteresis phenomena has been investigated for analyzing systems performance, for compensating the hysteresis through the deployment of inverse hysteresis and for embracing hysteresis as a new class of set-and-forget actuators \cite{huisman2021,beltran2021,keulen2024}. 

Several mathematical frameworks have been proposed to describe hysteresis, including operator-based models such as the Preisach and Prandtl-Ishlinskii (PI) models \cite{brokate2012hysteresis}, and differential equation-based models such as the Duhem models \cite{ikhouane2018survey}. The backlash operator, also the fundamental building block of the PI model, belongs to a subfamily of Duhem operators with counterclockwise input-output dynamics \cite{padthe2005counterclockwise, jayawardhana2012stability}, a property that is  connected to passivity in \cite{gorbet1998generalized}, where it is shown that hysteretic systems do not generate energy but dissipate it over each cycle. In the dissipativity framework of Willems \cite{willems1972dissipative}, a system is dissipative if there exists a storage function satisfying a dissipation inequality with a given supply-rate function. For hysteretic systems, the characterization of such storage functions is non-trivial due to the path-dependent, multivalued nature of the input-output map, where a single input value may correspond to multiple outputs. While passivity of the backlash operator has been established \cite{jayawardhana2009sufficient}, an explicit characterization of the family of admissible storage functions, bounded by the available storage and required supply functions, has not been addressed in the literature. Moreover, the non-conservative nature of hysteresis makes the unified formulation into an energy-based framework such as port-Hamiltonian systems particularly challenging.

Port-Hamiltonian (pH) systems provide a structured, energy-based, modeling framework that has proven effective for physical systems across multiple domains \cite{van2014port, duindam2009modeling}. However, physical elements exhibiting hysteresis behavior cannot be represented by a single storage port and dissipation port alone \cite{duindam2009modeling}. In \cite{karnopp2012system, karnopp1983computer}, hysteretic systems are modeled by introducing an additional nonlinear dissipative element to capture hysteretic energy dissipation. In recent years, irreversible pH formulations have been proposed \cite{ramirez2013irreversible}, where the dissipated energy is accounted for as thermal energy (entropy). A modular passivity based pH of piezo actuators has been proposed in \cite{diaz2025modular}, using 'hysterons' as in \cite{karnopp1983computer} to capture the hysteretic behavior.  Nevertheless, a systematic pH formulation that directly incorporates hysteretic dissipation through a nonlinear potential, consistent with the Willems dissipativity framework, remains an open problem.

In this paper, we address these challenges by developing a pH formulation of the backlash-driven storage elements. By considering the backlash hysteretic phenomenon in a nonlinear inductor element, 
we present a family of storage functions which are sandwiched between the available storage and required supply functions 
in the sense of Willems \cite{willems1972dissipative}. Based on these results, we show how to define an appropriate Hamiltonian function that allows us to cast the nonlinear inductor into a general port-Hamiltonian formalism that includes feedthrough terms. Accordingly, we show how we can include this hysteretic inductor element to represent an interconnected RLC circuit via the general pH formalism.   

\section{Preliminaries}
\subsection{Dissipativity Theory}\label{sec:dissipativity}
Consider a dynamical system $\Sigma$ \cite[Definition 1]{willems1972dissipative} with state space
$\mathcal{X} \subseteq \rline^n$, input space $\mathcal{U} \subseteq \rline^m$, and output space $\mathcal{Y} \subseteq \rline^m$, described by its set of admissible trajectories $(u(\cdot), x(\cdot), y(\cdot))$ on intervals of $\rline$.
A supply rate $w : \mathcal{U} \times \mathcal{Y} \to \mathbb{R}$ is a locally integrable function quantifying the rate at which generalized power is exchanged through the external ports of $\Sigma$. Finally, let a multivalued sign mapping be defined by
\begin{equation}\label{eq:sign_V}
\sign(V)=
\begin{cases}
-1, & V<0,\\
1, & V>0,\\
[-1,1], & V=0.
\end{cases}
\end{equation}

\begin{defn}{\rm \cite[Definition 2]{willems1972dissipative}} 
\label{def:dissipativity}
The system $\Sigma$ is \emph{dissipative} with respect to the supply rate
$w$ if there exists a non-negative function
$S : \mathcal{X} \to \mathbb{R}_{\geq 0}$, called a
\emph{storage function}, such that the \emph{dissipation inequality}
\begin{equation}
    S\bigl(x(t_1)\bigr)
    \;\leq\;
    S\bigl(x(t_0)\bigr)
    + \int_{t_0}^{t_1} w\bigl(u(t),\, y(t)\bigr)\, \mathrm{d}t
    \label{eq:dissipation_ineq}
\end{equation}
holds for all $t_1 \geq t_0$ and all admissible trajectories.
If $S$ is differentiable, \eqref{eq:dissipation_ineq} is equivalent to
$\dot{S}(x) \leq w(u, y)$.
\end{defn}
In this paper, we consider a class of dissipative systems so-called passive systems. The system $\Sigma$ is called \emph{passive} if it is dissipative w.r.t. the supply rate $w(u,y)=u^\top y$. 

The storage function satisfying Definition~\ref{def:dissipativity} is in
general not unique, i.e., there may exist 
other storage functions that can satisfy \eqref{eq:dissipation_ineq}. Willems in \cite{willems1972dissipative} characterized this family through two extremal storage functions.
The \emph{available storage}
\begin{equation}
    S^a(x_0)
    = \sup_{\substack{u(\cdot),\; T \geq 0 }}
    -\int\limits_0^{T} w\bigl(u(t), y(t)\bigr)\, \mathrm{d}t, \quad x(0)=x_0
    \label{eq:available_storage}
\end{equation}
is the maximum energy that can be extracted from $\Sigma$ starting at $x_0$, and
the \emph{required supply} 
\begin{equation}
    S^r(x_0)
    =\inf_{\substack{u(\cdot),\ T \geq 0  }}
    \int\limits_{-T}^{0} w\bigl(u(t), y(t)\bigr)\dd t , 
    \label{eq:required_supply}
\end{equation}
where  $x(-T)=x^*,\ x(0)=x_0$, is the minimum energy needed to reach $x_0$ from a reference state
$x^* \in \mathcal{X}$.
These functions bound any admissible storage function $S$ via the available storage and required supply bounds
\begin{equation}
    S^a(x) \;\leq\; S(x) \;\leq\; S^r(x) + S^a(x^*)
    \quad \forall\, x \in \mathcal{X},
    \label{eq:sandwich}
\end{equation}
and $\Sigma$ is dissipative if and only if $S^a(x) < \infty$ for all
$x \in \mathcal{X}$~\cite{willems1972dissipative}.

\subsection{Port-Hamiltonian systems}

For describing the hysteretic storage elements as pH systems, we require a general form of pH systems as follows \cite{van2024reciprocity, van2026hopfield}. 
A finite dimensional port-Hamiltonian system, with \textit{nonlinear} dissipation, takes the form
\begin{equation}\label{eq:pHsystem}
\begin{bmatrix}
    -\dot{x}\\
    y
\end{bmatrix}=\begin{bmatrix}
    -J(x)& -G(x)\\
    G^{\top}(x)& M(x)
\end{bmatrix}\begin{bmatrix}
    e\\
    u
\end{bmatrix}+
\begin{bmatrix}
    \frac{\partial P}{\partial e}(x,e,u)\\
    \frac{\partial P}{\partial u}(x,e,u)
\end{bmatrix}, 
\end{equation}
where $x \in \rline^n$ is the state, $e=\frac{\partial H}{\partial x}$ and $u,y \in \rline^m$ are the conjugate port variables. 
The Hamiltonian $H:\rline^n \to \rline$ represents the stored energy and is continuously differentiable. 
The interconnection matrices satisfies $J(x) = -J(x)^\top$, $M(x)= -M(x)^\top$, while $G(x)$ denotes the input matrix, and the dissipation potential $P:\rline^n\times\rline^n\times\rline^m\to\rline$ is such that
\[
e^\top\frac{\partial P}{\partial e}(x,e,u)+u^\top\frac{\partial P}{\partial u}(x,e,u)\geq0, 
\]
for all $x, e$ and $u$. Using skew-symmetry of the first matrix block, one can verify
\[
\frac{d}{dt}H - u^\top y = - \begin{bmatrix} e^\top & u^\top \end{bmatrix} \begin{bmatrix} - \dot{x} \\ y \end{bmatrix}=
- e^\top \frac{\partial P}{\partial e} - u^\top \frac{\partial P}{\partial u} \leq 0,
\]
where 
the last inequality follows from the dissipation condition. This shows that the system is \textit{passive} with supply rate $u^\top y$ and that $H$ serves as a storage function.


\section{Passivity of Backlash-driven Storage Elements}
In this section, we introduce a family of storage functions for hysteretic storage elements, which will be described shortly. Furthermore we present explicitly the associated available storage and required supply functions of such hysteretic storage elements. 

\begin{figure}[b]
    \centering
    \includegraphics[scale=0.8]{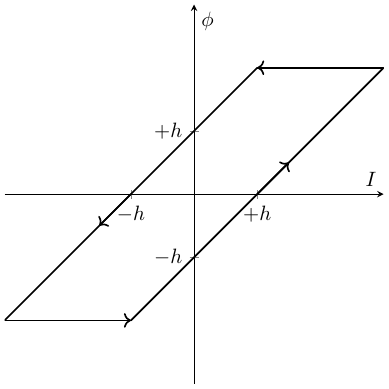}
    \caption{Phase plot of a simple 'ferromagnetic' backlash operator, with the variables $I$ and $\phi$. The adjective 'ferromagnetic' here stems from the physical law that we attach to this operator, where $\phi$ is coupled to a port variable $V$ by $\frac{\dd \phi}{\dd t}=V$. The slope of the diagonal line defines the inductance $L$. }
    \label{fig:backlash}
\end{figure}

For concreteness, we restrict attention to nonlinear inductors as the energy storage elements, which allow us to couple the hysteretic elements to the physical equations and port variables in a straightforward manner. Consider a backlash inductor element evolving in the $(I,\phi)$ plane, where $I$ is the \emph{electrical current} and $\phi$ is the \emph{magnetic flux}, as illustrated in Figure \ref{fig:backlash}. In this case, the electrical current and the magnetic flux are described by a backlash operator with width $2h$ and slope $L$, which is associated with the inductance. We note that the backlash behavior shown in Figure \ref{fig:backlash} can be considered an idealization of the hysteresis behavior commonly found in inductors with ferromagnetic cores. The corresponding physical equation for nonlinear inductors is the Lorentz induction law, which couples the potential field $V$ to the magnetic flux $\phi$ by
\begin{equation}\label{eq:induction_law}
\frac{\dd \phi}{\dd t}=V. 
\end{equation}
The pair of variables $(I,V)$ are the port variables of the inductor. 
Let us firstly present a family of storage functions for the backlash operators as follows.
\begin{prop}\label{prop:1}
For any $\gamma\in\left[-h, h\right]$, the function
\begin{equation}\label{eq:S_gamma}
    S_\gamma(I,\phi)=\begin{cases}
        \frac{1}{2L}\left[(\phi+\gamma)^2-\left(\frac{h+\gamma}{2}\right)^2\right] & 
        \forall\phi>\frac{h-\gamma}{2},\\
         \frac{1}{2L}\left[(\phi-\gamma)^2-\left(\frac{h+\gamma}{2}\right)^2\right] & 
         \forall\phi<-\frac{h-\gamma}{2},\\
         0 & \text{elsewhere,} 
    \end{cases}
\end{equation}
defines an admissible storage function for the backlash inductor with minimum value of $0$.
\end{prop}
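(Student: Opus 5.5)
The plan is to use the differential form of Definition~\ref{def:dissipativity}. Since $S_\gamma$ depends only on $\phi$, it suffices to show that $S_\gamma\ge 0$ with minimum $0$, and that $\tfrac{\dd}{\dd t}S_\gamma(\phi(t))\le I(t)V(t)$ for almost every $t$ along every admissible trajectory. Integrating then gives \eqref{eq:dissipation_ineq} with $w=IV$.

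\emph{Model convention.} I will read the backlash of Figure~\ref{fig:backlash} with current $I$ as input and flux $\phi$ as output. The admissible states are $|\phi-LI|\le h$. The flux changes only on the two boundary lines. It increases, $\dot\phi=V>0$, only when $\phi=LI-h$, that is $I=(\phi+h)/L$. It decreases, $V<0$, only when $\phi=LI+h$, that is $I=(\phi-h)/L$. In the interior $V=0$. This orientation is a guess from the figure, and everything below depends on it. If the paper uses the opposite labelling, the inequalities in the third step flip and the argument has to be redone.

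\emph{Regularity.} First I check that $S_\gamma$ is continuous. At $\phi=(h-\gamma)/2$ we have $\phi+\gamma=(h+\gamma)/2$, and at $\phi=-(h-\gamma)/2$ we have $\phi-\gamma=-(h+\gamma)/2$. In both cases the bracket in \eqref{eq:S_gamma} vanishes, matching the middle value $0$. Hence $S_\gamma$ is continuous and piecewise quadratic, so it is locally Lipschitz. Since $\phi(\cdot)$ is absolutely continuous, the chain rule $\tfrac{\dd}{\dd t}S_\gamma=S_\gamma'(\phi)V$ holds a.e. The junction points matter only on a null set, or on intervals where $V=0$ a.e.

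\emph{Nonnegativity.} For $\phi>(h-\gamma)/2$ we get $\phi+\gamma>(h+\gamma)/2\ge0$, because $\gamma\ge-h$. So $(\phi+\gamma)^2\ge((h+\gamma)/2)^2$ and $S_\gamma\ge0$. The region $\phi<-(h-\gamma)/2$ is symmetric. The middle region gives exactly $0$, so the minimum value is $0$.

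\emph{Dissipation inequality.} The key step is to show $(S_\gamma'(\phi)-I)V\le0$, checked branch by branch and region by region.
\begin{itemize}
\item When $V>0$ we need $S_\gamma'(\phi)\le(\phi+h)/L$:
\begin{itemize}
\item Right region: this reads $\gamma\le h$.
\item Left region: this reads $-\gamma\le h$.
\item Middle region: this reads $\phi\ge-h$. It holds because $\phi\ge-(h-\gamma)/2\ge-h$.
\end{itemize}
\item When $V<0$ we need $S_\gamma'(\phi)\ge(\phi-h)/L$:
\begin{itemize}
\item Right and left regions: this follows from $\gamma\ge-h$ and $\gamma\le h$ respectively.
\item Middle region: this needs $\phi\le h$, which follows from $\phi\le(h-\gamma)/2\le h$.
\end{itemize}
\item In the interior $V=0$, so the inequality is trivial.
\end{itemize}
All cases use only $\gamma\in[-h,h]$.

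The main obstacle I expect is fixing the correct branch orientation, i.e.\ which line is traversed in which direction. Once that is settled, the case check is elementary. A second, minor point is justifying the chain rule at the kinks of $S_\gamma$, which is covered by the Lipschitz argument above.
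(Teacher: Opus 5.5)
Your proof is correct and follows the same route as the paper. Both arguments differentiate $S_\gamma$ along trajectories and compare $S_\gamma'(\phi)V$ with $IV$ on each branch, using $I=(\phi+h)/L$ for $V>0$ and $I=(\phi-h)/L$ for $V<0$ (the paper uses this same orientation) together with $\gamma\in[-h,h]$. Your version is more complete: it also handles the middle region, nonnegativity with minimum $0$, continuity of $S_\gamma$, and the chain rule at the kinks, none of which the paper's proof treats explicitly.
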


\begin{proof}
We will prove Proposition \ref{prop:1} by evaluating the time-derivative of $S_\gamma$ along admissible trajectories of the backlash operator. 
Firstly, let us fix $\gamma \in [-h,h]$. In order to evaluate the time-derivative of $S_\gamma$, we consider two different cases following the definition of $S_\gamma$ in \eqref{eq:S_gamma} as follows. 

\medskip
\noindent\textit{Case 1:} For the case $\phi>\frac{h-\gamma}{2}$, we have 
\begin{equation}\label{eq:sdot1}
\frac{d}{dt}S_\gamma(I,\phi)= \frac{1}{L}(\phi + \gamma)\dot{\phi} = \frac{1}{L}(\phi + \gamma)V
\end{equation}
When $V=0$ it follows immediately that $\frac{d}{dt}S_\gamma(I,\phi)=0$. 
On the one hand, when $V<0$, which corresponds to the downward motion along the left line of the backlash diagram, the RHS of \eqref{eq:sdot1} satisfies 
\[
\frac{1}{L}(\phi + \gamma)V = \frac{1}{L}(\phi - h + \gamma +h)V = IV + \frac{1}{L}(\gamma+h)V \leq IV,
\]
which is due to $I= \frac{1}{L}(\phi -h)$, and $V<0$, $\frac{1}{L}(\gamma+h)\geq 0$. 
On the other hand, when $V>0$ (i.e. the upward path along the right line of backlash diagram), the RHS of \eqref{eq:sdot1} becomes
\[
\frac{1}{L}(\phi + \gamma)V = \frac{1}{L}(\phi + h + \gamma -h)V = IV + \frac{1}{L}(\gamma-h)V \leq IV,
\]
where we have used $I = \frac{1}{L}(\phi+ h)$,  
$V>0$ and $\frac{1}{L}(\gamma-h)\leq 0$. 

\medskip
\noindent\textit{Case 2:} For the case $\phi< \frac{h-\gamma}{2}$, it can be checked that
\begin{equation}\label{eq:sdot2}
\frac{d}{dt}S_{\gamma}(I,\phi)=\frac{1}{L}(\phi-\gamma)\dot{\phi}=\frac{1}{L}(\phi-\gamma)V.
\end{equation}
Similar as before, $V=0 \Rightarrow \frac{d}{dt}S_\gamma(I,\phi)=0$. When $V<0$ (along the downward path), the RHS of \eqref{eq:sdot2} satisfies
\[
\frac{1}{L}(\phi - \gamma)V = \frac{1}{L}(\phi - h - \gamma+h)V = IV +\frac{1}{L}(-\gamma+h)V \leq IV,
\]
where $I=\frac{1}{L}(\phi-h)$, $V<0$, and $(-\gamma+h)V\leq0$. Finally, when $V>0$ (along the upward path), the RHS of \eqref{eq:sdot2} becomes
\[
\frac{1}{L}(\phi-\gamma)V=\frac{1}{L}(\phi+h-\gamma-h))V=IV-\frac{1}{L}(\gamma+h)V\leq IV,
\]
where $I=\frac{1}{L}(\phi+h)$, $V>0$ and $(\gamma+h)V\geq0$. 
\end{proof}

\medskip

One can check that along any closed trajectory in the $(I,\phi)$ plane between $\phi_2 \geq \phi_1$ (contained within the backlash characteristic) the total \emph{dissipated energy} is given as
\[
2h [\phi_2 - \phi_1]
\]
for \emph{every} $\gamma \in [-h,h]$. This quantity equals to the area enclosed by the trajectory. 
It can also be shown that $S_\gamma$ has monotonicity property with respect to $\gamma$. More precisely, one can check that for all $(I,\phi)$, we have the interlacing property $S_{\gamma_1}(I,\phi)\leq S_{\gamma_2}(I,\phi)$ for all $-h\leq \gamma_1\leq \gamma_2 \leq h$. Figure \ref{fig:admissible_storage_function} shows the plot of $S_{-h}$, $S_h$ and the shaded gray area that encapsulates the family of storage functions $S_\gamma$ as given in \eqref{eq:S_gamma}, which satisfy this property as formalized in the following proposition. 

\begin{figure}[b]
    \centering
    \includegraphics[]{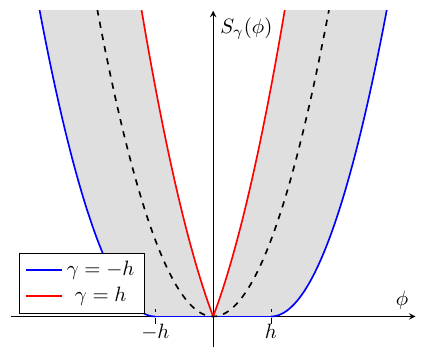}
    \caption{Admissible storage functions $S_\gamma(\phi)$ for the backlash operator. The extreme cases $\gamma=\pm h$ bound the family, while the shaded region represents 
    all admissible functions for $\gamma\in[-h,h]$. All functions attain a minimum value of zero. The dashed line indicates the admissible Hamiltonian of the backlash inductor, derived in \eqref{eq:Hbf}.}

    \label{fig:admissible_storage_function}
\end{figure}
\begin{prop}\label{lem:monotonicity}
    Let $S_\gamma$ be defined by \eqref{eq:S_gamma}. Then for any $-h\leq\gamma_1<\gamma_2\leq h$, 
    \begin{equation}\label{eq:S_ineq} 
    S_{\gamma_1}(I,\phi)\leq S_{\gamma_2}(I,\phi) \quad \text{for all $(I,\phi)$}.
    \end{equation}
\end{prop}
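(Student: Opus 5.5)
Since $S_\gamma$ depends only on $\phi$, we drop $I$. The plan is to rewrite $S_\gamma$ as a product of factors whose dependence on $\gamma$ is transparently monotone. First note that \eqref{eq:S_gamma} is even in $\phi$: on the branch $\phi<-\frac{h-\gamma}{2}$ we have $(\phi-\gamma)^2=(|\phi|+\gamma)^2$. Hence
\[
S_\gamma(\phi)=\frac{1}{2L}\Bigl[(|\phi|+\gamma)^2-\Bigl(\frac{h+\gamma}{2}\Bigr)^2\Bigr]\ \text{ if } |\phi|>\frac{h-\gamma}{2}, \qquad S_\gamma(\phi)=0 \ \text{ otherwise}.
\]
Factoring the difference of squares gives, on the active region,
\[
S_\gamma(\phi)=\frac{1}{2L}\Bigl(|\phi|-\frac{h-\gamma}{2}\Bigr)\Bigl(|\phi|+\frac{h+3\gamma}{2}\Bigr).
\]

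Next I record two facts.

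\textbf{(i) The active region is an up-set in $\gamma$.} For fixed $\phi$, the condition $|\phi|>\frac{h-\gamma}{2}$ is equivalent to $\gamma>h-2|\phi|$. So if it holds for $\gamma_1$, it holds for every $\gamma_2>\gamma_1$.

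\textbf{(ii) Both factors are positive and nondecreasing on the active region.}
\begin{itemize}
\item The first factor $|\phi|-\frac{h-\gamma}{2}$ is positive there by definition and increases in $\gamma$ with slope $\tfrac12$.
\item The second factor $|\phi|+\frac{h+3\gamma}{2}$ increases in $\gamma$ with slope $\tfrac32$. It is positive there because $|\phi|+\frac{h+3\gamma}{2}>\frac{h-\gamma}{2}+\frac{h+3\gamma}{2}=h+\gamma\geq 0$, using $\gamma\ge -h$.
\end{itemize}

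With these, fix $\phi$ and $-h\le\gamma_1<\gamma_2\le h$, and split into cases.
\begin{itemize}
\item If $\phi$ is not in the active region for $\gamma_1$, then $S_{\gamma_1}(\phi)=0\le S_{\gamma_2}(\phi)$. Here we use that every $S_\gamma$ is nonnegative, which holds because $S_\gamma$ is either zero or a product of two positive factors (equivalently, by Proposition~\ref{prop:1}).
\item If $\phi$ is in the active region for $\gamma_1$, then by (i) it is in the active region for $\gamma_2$ as well. By (ii), $S_\gamma(\phi)$ is then a product of two positive nondecreasing functions of $\gamma\in[\gamma_1,\gamma_2]$, hence nondecreasing, which gives \eqref{eq:S_ineq}.
\end{itemize}

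The only step needing care is the positivity of the second factor. On the complement of the active region that factor can be negative, so the naive expression $\max\{0,(|\phi|+\gamma)^2-((h+\gamma)/2)^2\}$ does not represent $S_\gamma$ there. This is exactly why I handle the zero region separately in the case split, rather than differentiating the quadratic in $\gamma$ globally.
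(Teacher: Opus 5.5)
Your proof is correct and follows the same skeleton as the paper's: split on whether $\phi$ lies in the active region for $\gamma_1$, use $S_{\gamma_2}\ge 0$ when it does not, and prove monotonicity in $\gamma$ when it does. The two arguments differ in how the active case is handled.

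\emph{The paper's route.} It works with $\phi>0$ and expands the first branch to $g(\gamma)=2\phi\gamma+\frac{3}{4}\gamma^2-\frac{h}{2}\gamma$, dropping $\gamma$-independent terms. It then bounds $g'(\gamma)=2\phi+\frac{3}{2}\gamma-\frac{h}{2}>\frac{h+\gamma}{2}\ge 0$ using $2\phi>h-\gamma$. The case $\phi<0$ is handled by appeal to symmetry.

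\emph{Your route.} You use evenness to treat both signs of $\phi$ at once. You then factor $S_\gamma$ into two linear functions of $\gamma$, each positive with positive slope on the active region.

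\emph{Comparison.} The computations are equivalent: applying the product rule to your factorization reproduces the paper's $g'$. Your version, however, is derivative-free and makes explicit two points the paper leaves implicit. First, the constraint $2\phi>h-\gamma$ used in the derivative bound must hold for every $\gamma\in[\gamma_1,\gamma_2]$; this is exactly your up-set observation (i). Second, $S_\gamma\ge 0$, which the paper's Case~2 takes for granted. Your warning about the $\max\{0,\cdot\}$ form is also justified: for $\gamma=-h$ and $\phi=0$ the bracketed expression equals $h^2>0$, while $S_{-h}(0)=0$.
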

\begin{proof}
We distinguish four cases based on the value of $\phi$ of our backlash inductor element as shown in Figure \ref{fig:backlash}.

\noindent\textit{Case 1: $\phi > \frac{h-\gamma_1}{2}$}. 
Both $S_{\gamma_1}$ and $S_{\gamma_2}$ are given by the first case of \eqref{eq:S_gamma}. 
The inequality $S_{\gamma_1}(\phi) \leq S_{\gamma_2}(\phi)$ is equivalent to showing that
\[
g(\gamma) = \frac{1}{2L}\left[(\phi+\gamma)^2 - \left(\frac{h+\gamma}{2}\right)^2\right]
\]
is strictly increasing in $\gamma$. Expanding and canceling the $\phi^2$ terms, this reduces to showing that
\[
g(\gamma) = 2\phi\gamma + \frac{3}{4}\gamma^2 - \frac{h}{2}\gamma
\]
is strictly increasing in $\gamma$. Computing the gradient of $g$ yields
\[
\frac{\dd g}{\dd \gamma} = 2\phi + \frac{3}{2}\gamma - \frac{h}{2} > h - \gamma + \frac{3}{2}\gamma - \frac{h}{2} = \frac{h}{2} + \frac{1}{2}\gamma \geq 0,
\]
where we used $2\phi > h - \gamma$ and $\gamma \geq -h$. This implies that $g$ is strictly 
increasing in $\gamma$. Accordingly, if $\gamma_1 < \gamma_2$ then  
$S_{\gamma_1}(\phi) < S_{\gamma_2}(\phi)$.

\medskip
\noindent\textit{Case 2: $\frac{h-\gamma_1}{2}\geq\phi>\frac{h-\gamma_2}{2}$}. 
Here $S_{\gamma_1}(\phi) = 0$ while $S_{\gamma_2}(\phi) \geq 0$, i.e. \eqref{eq:S_ineq} holds. 

\medskip
\noindent\textit{Case 3: $|\phi| \leq \frac{h-\gamma_2}{2}$}. 
In this case, both storage functions are zero. 

\medskip
\noindent\textit{Case 4: $\phi<-\frac{h-\gamma}{2}$}. For this final case, the proof follows similarly as before due to the 
symmetry in \eqref{eq:S_gamma}. 
\end{proof}

\medskip

\noindent

In the following discussion, we consider the notion of available storage and required supply functions for the backlash inductor elements, as introduced before in Section~\ref{sec:dissipativity}. As we are dealing with a multivalued backlash operator, instead of using the zero state as the origin in the definition of available and required storage function, we consider the set $\mathcal{X}_0:=\{(I,\phi)\,|\, I = 0, \phi\in [-h,h]\}$ as the \emph{ground state set}. In this way, the available storage function corresponds to the maximum energy that can be extracted from any state $(I_0,\phi_0)\in\rline^2$ to $\mathcal{X}_0$. The general definition \eqref{eq:available_storage} is then implemented for the backlash operator as follows.
\begin{defn}
The {available storage function} $S^a$ of the backlash inductor elements is defined by 
    \begin{equation}\label{eq:avstorage}
        S^{a}(I_0,\phi_0)=\sup_{\substack{(I(\cdot),V(\cdot)), T\geq0}}-\int^T_0I(t)V(t)dt
    \end{equation}
with $V(t) = \dot{\phi}$ and $I(0)=I_0,\phi(0)=\phi_0$. 
\end{defn}


\begin{prop}
For the backlash inductor element with width $2h$ and inductance $L$, the available storage function is given by
$S^a(I_0,\phi_0) = S_{-h}(I_0,\phi_0)$, 
where $S_{-h}$ is as in \eqref{eq:S_gamma} with $\gamma=-h$, i.e.,  
\begin{equation}\label{eq:availablestorage}
S^a(I_0,\phi_0)=\begin{cases}
    \frac{1}{2L}(\phi_0-h)^2 & 
    \qquad \forall\phi_0> h,\\
    \frac{1}{2L}(\phi_0+h)^2 & 
    \qquad \forall\phi_0<-h, \\
    0 & \qquad \text{elsewhere}. 
\end{cases}  
\end{equation}
\end{prop}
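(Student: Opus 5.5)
The plan is to prove two inequalities, $S^a\le S_{-h}$ and $S^a\ge S_{-h}$. First I will check that \eqref{eq:S_gamma} with $\gamma=-h$ gives \eqref{eq:availablestorage}: the thresholds are $\pm\frac{h-\gamma}{2}=\pm h$, the constant $\left(\frac{h+\gamma}{2}\right)^2$ vanishes, and the quadratic terms become $(\phi\mp h)^2/(2L)$.

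For the upper bound, Proposition~\ref{prop:1} gives that $S_{-h}$ is a storage function with $S_{-h}\ge 0$. Integrating the dissipation inequality along any admissible trajectory starting at $(I_0,\phi_0)$ gives
\[
-\int_0^T I(t)V(t)\,\dd t\le S_{-h}(I_0,\phi_0)-S_{-h}(I(T),\phi(T))\le S_{-h}(I_0,\phi_0).
\]
Taking the supremum over inputs and $T$ yields $S^a\le S_{-h}$. This is the standard fact that the available storage lies below every storage function, i.e.\ the left inequality in \eqref{eq:sandwich}. Since $S_{-h}$ is only piecewise differentiable, I will note that it is continuous and $C^1$ at the thresholds $\phi=\pm h$, where both $(\phi\mp h)^2$ and its derivative vanish. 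Hence the derivative computation in Proposition~\ref{prop:1} integrates along absolutely continuous $\phi(\cdot)$.

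For the lower bound I will construct an explicit extracting trajectory. Take $\phi_0>h$, since the case $\phi_0<-h$ is symmetric and the remaining region needs only $S^a\ge0$, which follows from $T=0$. I will drive the flux down monotonically with $V<0$, say $\phi(t)=\phi_0-t$ for $t\in[0,\phi_0-h]$. The state then first moves horizontally inside the backlash band with $I$ fixed until it hits the left branch $I=(\phi-h)/L$. If $I_0>(\phi_0-h)/L$, this portion has $I>0$ and $V<0$, so it contributes extra extracted energy $-IV>0$. Next the state slides down the left branch until $\phi=h$, where $I=0$ and the state lies in $\mathcal{X}_0$.

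The extracted energy will be at least $\int_h^{\phi_0}\frac{1}{L}(\phi-h)\,\dd\phi=\frac{1}{2L}(\phi_0-h)^2$. More cleanly, along the whole path $I(t)\ge(\phi(t)-h)/L$, so
\[
-\int_0^T IV\,\dd t=\int_h^{\phi_0} I\,\dd\phi\ge\frac{1}{2L}(\phi_0-h)^2.
\]
Combined with the upper bound, this gives equality.

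The main obstacle is bookkeeping about which states are admissible, namely that $I_0$ is consistent with $\phi_0$ via the backlash constraint $(\phi-h)/L\le I\le(\phi+h)/L$. I will also need the monotone-decrease argument to handle the inner segment correctly; the key inequality $I\ge(\phi-h)/L$ along decreasing $\phi$ resolves this.
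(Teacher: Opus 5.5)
Your proof is correct, but it establishes optimality differently from the paper. The paper constructs the monotone extracting input ($V\le 0$ taking $\phi_0$ down to $h$, which yields $\frac{1}{2L}(\phi_0-h)^2$). It then argues that this is the supremum by comparing against particular competitors: a reversal of size $\epsilon$ costs an extra $\Delta(\epsilon)>0$, and continuing past $\phi=h$ only adds $IV\ge 0$. The inner case is handled by arguing $IV\ge 0$ starting from $I_0=0$. You instead obtain $S^a\le S_{-h}$ for \emph{all} inputs at once, by integrating the dissipation inequality of the nonnegative storage function $S_{-h}$ from Proposition~\ref{prop:1} (the left half of \eqref{eq:sandwich}). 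You use the explicit trajectory only for $S^a\ge S_{-h}$. This buys you three things:
\begin{itemize}
\item a genuine supremum argument, whereas the paper rules out only specific deviations, and its inner-case claim $I(t)V(t)\ge 0$ for all $t$ holds only while $|\phi(t)|\le h$;
\item coverage of every admissible $I_0$ in the band rather than only $I_0=0$;
\item the case $|\phi_0|\le h$ immediately, from $0\le S^a\le S_{-h}=0$.
\end{itemize}
The paper's route is self-contained and shows the reversal cost that explains \emph{why} monotone extraction is optimal. Yours leans on Proposition~\ref{prop:1}, and your $C^1$ remark at $\phi=\pm h$ supplies exactly the regularity the integration step needs.

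One part of your narrative is wrong and should be deleted, although the inequality you actually use survives it. In this model $I\in\frac{\phi}{L}+\frac{h}{L}\sign(V)$. So once $V<0$, the current lies on $I=(\phi-h)/L$ for all $t>0$, jumping there at $t=0^+$ if necessary. There is no stretch with $I$ frozen while $\phi$ decreases. Such a stretch would not even reach the left branch: with $I=I_0>(\phi_0-h)/L$ fixed and $\phi$ decreasing, the state runs into the other branch $I=(\phi+h)/L$ instead. Worse, a segment with $V<0$ and $I>(\phi-h)/L$ would give $\frac{d}{dt}S_{-h}=\frac{1}{L}(\phi-h)V>IV$, violating Proposition~\ref{prop:1}. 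It would also extract strictly more than $\frac{1}{2L}(\phi_0-h)^2$, contradicting the upper bound you just derived. Your bound $\int_h^{\phi_0} I\,\dd\phi\ge\frac{1}{2L}(\phi_0-h)^2$ needs only the band constraint, and under the actual dynamics it holds with equality. So the ``main obstacle'' about $I_0$ disappears: the extracted energy does not depend on $I_0$ at all.
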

\begin{proof}
The RHS of \eqref{eq:avstorage} corresponds to the maximum energy that can be extracted from the hysteretic storage element, where the extracted energy is equal to $-\int_0^TI(t)V(t)dt$. In the following, we will prove this proposition by constructing an input signal $V$ that maximizes the energy extraction without inducing energy dissipation along the path.  

Let us consider $\phi_0>h$ and an external port signal $V(t)=\frac{\dd \phi(t)}{\dd t}\leq 0$ such that $\phi(0)=\phi_0$, $\phi(T)=h$ and $V(t)=0$ for all $t\geq T$. In this case, the trajectory follows the line $I(t)= \frac{\phi(t)-h}{L}$ for all $t\in [0,T]$ (cf. Figure \ref{fig:backlash}) and it holds that
\begin{align}
\nonumber & -\int_0^TI(t)V(t)\dd t = -\int_0^T\frac{1}{L}(\phi(t)-h)V(t)\dd t \\
 & = -\frac{1}{L}\int_{\phi_0}^h(\phi-h)\dd\phi 
\label{eq:integral_Sa} = \frac{1}{2L}(\phi_0-h)^2. 
\end{align}
Let us now consider port signal $V$ that first increases $\phi$ to $\phi_0+\epsilon$ by taking $V(t)>0$ for $t\in [0,t_1]$, then applying $V(t)\leq 0$ as before for $t\in [t_1,T]$ such that $\phi(T)=h$, and thereafter $V(t)=0$ for all $t>T$. For this case, the trajectory of $I$ will satisfy $I(t)= \frac{\phi(t)+h}{L}$ for all $t\in [0,t_1]$ and $I(t)= \frac{\phi(t)-h}{L}$ for all $t\in [t_1,T]$ (cf. Figure \ref{fig:backlash}). Accordingly, it can be calculated that
\begin{align}
\nonumber &-\int_0^TI(t)V(t)\dd t  
= -\int_0^{t_1}\frac{1}{L}(\phi(t)+h)V(t)\dd t \\
\nonumber &\qquad \qquad \qquad \qquad - \int_{t_1}^{T}\frac{1}{L}(\phi(t)-h)V(t)\dd t \\
\nonumber &= -\frac{1}{L}\int_{\phi_0}^{\phi_0+\epsilon}(\phi+h)\dd\phi 
           -\frac{1}{L}\int_{\phi_0+\epsilon}^{h}(\phi-h)\dd\phi
\end{align}
\begin{align}
\nonumber &= -\frac{1}{L}\int_{\phi_0}^{\phi_0+\epsilon}(\phi+h)\dd\phi 
           -\frac{1}{L}\int_{\phi_0+\epsilon}^{\phi_0}(\phi-h)\dd\phi \\
\nonumber &\quad -\frac{1}{L}\int_{\phi_0}^{h}(\phi-h)\dd\phi \\
\label{eq:dissipation_term_in_av}
          &= \frac{1}{2L}(\phi_0-h)^2 - \Delta(\epsilon),
\end{align}
where $\Delta(\epsilon) = \frac{1}{L}\int_{\phi_0}^{\phi_0+\epsilon}(\phi+h)\dd\phi - \frac{1}{L}\int_{\phi_0}^{\phi_0+\epsilon}(\phi-h)\dd\phi = 2\epsilon h >0$.  


Correspondingly, since $\Delta(\epsilon)>0$, it is clear from the two cases of port signal $V(t)$ as above that evaluating \eqref{eq:avstorage} corresponds to the first choice of $V(t)$ that gives us \eqref{eq:integral_Sa}, i.e., the first case in \eqref{eq:availablestorage} holds. Furthermore, any trajectory where $\phi(T) < h$ stays on the line 
$I(t) = \frac{1}{L}(\phi(t)-h)$ past $\phi = h$, where $I(t) < 0$ 
and $V(t) \leq 0$, so $I(t)V(t) \geq 0$ and extending the trajectory 
only reduces the extracted energy. Together with the argument above, this confirms that \eqref{eq:integral_Sa} is indeed the supremum. 
The same argumentation applies also when $\phi_0<-h$, in which case the second case in \eqref{eq:availablestorage} is also satisfied.

It remains to prove the last case in \eqref{eq:availablestorage}. When $\phi_0 \in [-h,h]$ and the state $(I_0, \phi_0) \in\mathcal{X}_0$, we have $I_0 = 0$. For any $V = \dot{\phi} \neq 0$, the backlash characteristic in Figure~\ref{fig:backlash} shows that $I(t)$ immediately takes the same sign as $V(t)$, so that $I(t)V(t) \geq 0$ for all $t \geq 0$. Therefore
\[
-\int_0^T I(t)V(t)\,\dd t \leq 0,
\]
with equality only when $V = 0$. Since the supremum in \eqref{eq:avstorage} is taken over all admissible inputs, the maximum is attained by $V = 0$ so that $S^a(I_0, \phi_0) = 0$. 
\end{proof}

As discussed before, the available storage function corresponds to the maximum energy that can be extracted from the backlash operator starting at any given state $(I_0,\phi_0)$. The dual notion, the required supply function \eqref{eq:required_supply} corresponds to the minimum energy needed to bring the system from the ground state $(I(-T), \phi(-T))=(I^*,\phi^*)$ to a given state $(I(0),\phi(0)) = (I_0,\phi_0)$, where we can take any point $(I^*,\phi^*)$ from the set $\mathcal X^*:=\{(I,\phi)\, |\, \phi = 0, I\in [-h,h]\}$ as the ground state. In the following definition and proposition, we use the origin as the ground state. 

\begin{defn}
The required supply $S^r$ of the backlash inductor elements is defined by
\begin{equation}\label{eq:required_supply_def}
    S^{r}(I_0,\phi_0) = \inf_{\substack{(I(\cdot),V(\cdot)),\, T\geq 0}} 
    \int_{-T}^{0} I(t)V(t)\,\dd t,
\end{equation}
with $V(t) = \frac{\dd\phi(t)}{\dd t}$ and $I(0)=I_0,\, \phi(0)=\phi_0$, with the ground state $(I(-T),\phi(-T))=(I^*,\phi^*)\in\mathcal{X}^*$.
\end{defn}


\begin{prop}
For the backlash inductor elements, the required supply \eqref{eq:required_supply} 
with origin as the ground state 
is given by 
\[
S^r(I_0,\phi_0) = S_{h}(I_0,\phi_0),
\] 
where $S_{h}$ is as in \eqref{eq:S_gamma} with $\gamma=h$, i.e.,
\begin{equation}\label{eq:requiredsupply}
    S^r(I_0,\phi_0)=\begin{cases}
        \frac{1}{2L}(\phi_0+h)^2 - \frac{1}{2L}h^2 & \text{if} \quad \phi_0\geq 0,\\
        \frac{1}{2L}(\phi_0-h)^2 - \frac{1}{2L}h^2 & \text{if} \quad \phi_0\leq 0.
    \end{cases}
\end{equation}
\end{prop}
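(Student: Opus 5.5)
The plan mirrors the proof of the available storage: first exhibit an explicit input that steers the origin to $(I_0,\phi_0)$ with supplied energy exactly $S_h(I_0,\phi_0)$, then show that no admissible trajectory does better. By the symmetry of the backlash and of \eqref{eq:S_gamma}, it suffices to treat $\phi_0\geq 0$. Note also that at $\phi_0\ge 0$ the value $S_h=\frac{1}{2L}[(\phi_0+h)^2-h^2]$ depends only on $\phi_0$.

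\emph{Upper bound.} Start at the origin, where $I=0$ and $\phi=0$. Apply $V(t)\geq 0$ on $[-T,0]$ so that $\phi$ increases monotonically from $0$ to $\phi_0$. For the origin I take the convention that an upward motion immediately lands on the right branch $I=\frac{1}{L}(\phi+h)$, as in the proof of Proposition \ref{prop:1}. Then
\[
\int_{-T}^0 I V\,\dd t=\frac{1}{L}\int_0^{\phi_0}(\phi+h)\,\dd\phi=\frac{1}{2L}\bigl[(\phi_0+h)^2-h^2\bigr].
\]
If the target current $I_0$ lies strictly inside the band, i.e.\ $I_0<\frac{1}{L}(\phi_0+h)$, I append a final decreasing leg. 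This costs an extra $O(\epsilon)$ term of the form $\Delta(\epsilon)$ from the available-storage proof, or it is handled by noting that $S_h$ only depends on $\phi$. Taking the infimum, or letting $\epsilon\to 0$, gives $S^r\leq S_h$.

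\emph{Lower bound.} Rather than optimizing over paths directly, I would invoke Proposition \ref{prop:1} with $\gamma=h$. Along any admissible trajectory this gives $\frac{d}{dt}S_h\leq IV$. Integrating from $-T$ to $0$ yields
\[
\int_{-T}^0 IV\,\dd t\geq S_h(I_0,\phi_0)-S_h(I^*,\phi^*)=S_h(I_0,\phi_0),
\]
since $S_h$ vanishes at the origin; indeed, for $\gamma=h$ the zero region is just $\phi=0$. This shows $S^r\geq S_h$ and completes the identification $S^r=S_h$.

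\emph{Main obstacle.} The delicate step is the lower bound, specifically applying Proposition \ref{prop:1} with $\gamma=h$. There $S_h$ is given by two quadratic branches meeting at $\phi=0$ with a kink, with one-sided slopes $\pm h/L$. The proof of Proposition \ref{prop:1} handles each branch separately, so I must check that $S_h$ is absolutely continuous along trajectories and that crossings of $\phi=0$, or lingering there, cause no trouble. Since $S_h$ is continuous and $\phi(\cdot)$ is absolutely continuous, the dissipation inequality can be integrated piecewise on the intervals where $\phi>0$ or $\phi<0$. On the set where $\phi=0$ one has $\dot\phi=0$ almost everywhere, so that set contributes nothing to either side. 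This piecewise integration is the part I would write carefully.
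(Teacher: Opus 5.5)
Your proof is correct. The upper bound is the paper's: both compute $\int_{-T}^{0}IV\,\dd t=\frac{1}{L}\int_0^{\phi_0}(\phi+h)\,\dd\phi$ along the monotone path on the branch $I=\frac{1}{L}(\phi+h)$.

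The lower bound is where you take a genuinely different route. The paper asserts that every non-monotone trajectory incurs the extra dissipation $\Delta(\epsilon)>0$. That quantity was computed only for one overshoot manoeuvre in the available-storage proof, so it does not by itself rule out arbitrary paths, such as ones that first move into $\phi<0$ or oscillate. You instead integrate the dissipation inequality of Proposition~\ref{prop:1} with $\gamma=h$ and use $S_h=0$ on $\{\phi=0\}$. This is the elementary consequence $S^r(x)\ge S(x)-S(x^*)$ of the dissipation inequality, applied to the storage function $S_h$. It covers all admissible trajectories at once, and it shows the result does not depend on which ground state in $\mathcal{X}^*$ is chosen.

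The trade-off is this. The paper's route is shorter and names the optimal path without any regularity discussion. Yours makes the lower bound rigorous, at the cost of the kink at $\phi=0$ that you correctly flag. That step collapses to one line once you note $S_h(\phi)=\frac{1}{2L}\phi^2+\frac{h}{L}|\phi|$. For absolutely continuous $\phi$ and $I\in\frac{\phi}{L}+\frac{h}{L}\sign(V)$ one gets, almost everywhere,
$\frac{\dd}{\dd t}S_h=\frac{1}{L}\phi\dot\phi+\frac{h}{L}\frac{\dd}{\dd t}|\phi|\le\frac{1}{L}\phi V+\frac{h}{L}|V|=IV$.
Equality holds exactly when $|\phi(\cdot)|$ is nondecreasing, which recovers the paper's description of the optimal (monotone) path as a by-product.

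Finally, you do not need an $\epsilon$-overshoot to reach an interior current $I_0$. When $V=0$ the relation leaves $I$ free within the band and $IV=0$, so a terminal rest interval reaches any admissible $I_0$ at zero cost.
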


\begin{proof}
We construct admissible port signals $(I,V)$ that achieve the infimum in 
\eqref{eq:required_supply_def} for any given state $(I_0,\phi_0)$, starting 
from the given ground state $(I^*,\phi^*)=0$. 

The infimum in \eqref{eq:required_supply_def} is achieved by the monotonically increasing trajectory with $V(t)>0$ for all $t\in[-T,0]$, along which $I(t)=\frac{\phi(t)+h}{L}$. Indeed, any trajectory that is not monotone introduces additional dissipation $\Delta(\epsilon)>0$ as shown in \eqref{eq:dissipation_term_in_av}, and is therefore not optimal. The required supply along the optimal path is
\begin{align}
\nonumber \int_{-T}^0 I(t)V(t)\,\dd t &= \int_{\phi^*}^{\phi_0} 
\frac{\phi+h}{L}\,\dd\phi \\
&= \frac{1}{2L}(\phi_0+h)^2 - \frac{1}{2L}(\phi^*+h)^2,
\end{align}
which, after substituting $\phi^*=0$, 
yields the first case of \eqref{eq:requiredsupply}. 

The case $\phi_0 \leq 0$ follows by symmetry; an analogous argument with $V(t)<0$ and $I(t)=\frac{\phi(t)-h}{L}$ yields the second case of \eqref{eq:requiredsupply}. 
\end{proof} 

Following the available storage and required supply bounds as in \eqref{eq:sandwich}, any admissible storage 
function $S$ satisfies
\begin{equation}\label{eq:Sa_Sr_sandwich}
S^a(I,\phi) \leq S(I,\phi) \leq S^r(I,\phi), 
\end{equation}
where $S^a = S_{-h}$ and $S^r = S_h$ are the extremal elements of the family 
$S_\gamma$ defined in \eqref{eq:S_gamma}. Here we have used the fact that $S^a(I^*,\phi^*)=0$. 

\medskip
\section{Port-Hamiltonian formulation}
In this section, we first derive a port-Hamiltonian (pH) representation of the  backlash inductor element. For illustration purpose, we present its interconnection 
with a linear resistor and linear capacitor in series and in parallel. 

\subsection{Backlash inductor element in pH}

Consider the backlash diagram in the $(I,\phi)$-plane as shown in 
Fig.~\ref{fig:backlash}. 
Using the multivalued sign mapping as in \eqref{eq:sign_V}, a port-Hamiltonian representation of the backlash inductor can be given by
\begin{equation}\label{eq:pHbacklash}
\begin{aligned}
\dot{\phi} &= V,\\
I &= \frac{\partial H_{bf}}{\partial \phi}
     + \frac{h}{L}\sign(V),
\end{aligned}
\end{equation}
with Hamiltonian
\begin{equation}\label{eq:Hbf}
H_{bf}(\phi) = \frac{1}{2L}\phi^2.
\end{equation}
Note that $H_{bf}$ is equal to the magnetic energy of a linear inductor with inductance $L$. 
As shown in Fig.~\ref{fig:admissible_storage_function}, $H_{bf}$ (shown in dashed-line) is an admissible storage function for the backlash inductor element in the sense that it is bounded by  
$S^a(I,\phi)$ and $S^r(I,\phi)$ as in \eqref{eq:Sa_Sr_sandwich}. 

The first term in \eqref{eq:pHbacklash} corresponds to the energy-storing linear inductance, while the second one is the term that induces 
hysteretic dissipation. The latter acts as a nonlinear feedthrough mapping and is structurally analogous to Coulomb friction. In other words, the presence of backlash in the storage element induces energy dissipation through a feedthrough term, unlike the usual nonlinear dissipation elements that appear directly in the state equation. 

The energy balance follows directly from \eqref{eq:pHbacklash}, i.e.
\begin{equation}\label{eq:bf_energy}
\frac{d}{dt}H_{bf}
= \frac{\partial H_{bf}}{\partial \phi}\dot{\phi}
= \Big(I - \frac{h}{L}\sign(V)\Big)V,
\end{equation}
and since $\frac{h}{L}\sign(V)V \geq 0$, it follows that 
\[
\frac{d}{dt}H_{bf} \le IV.
\]
This shows that the backlash inductor is passive with respect to the port variables $(V,I)$. 

The model in \eqref{eq:pHbacklash} can be rewritten as follows, using convex analysis. Note that the monotone multi-valued sign mapping can be expressed as the sub-differential of the convex function 
\begin{equation} \label{eq:nonlineardissipation} 
P(V):=\frac{h}{L}|V|,
\end{equation}
so that $\frac{\partial P(V)}{\partial V} =\frac{h}{L}\sign(V)$. Consequently, the pH system of \eqref{eq:pHbacklash} can be also written as
\begin{equation}\label{eq:pHfeedthrough}
\begin{aligned}
\dot{\phi} &= V,\\
I &= \frac{\partial H_{bf}}{\partial \phi}
     + \frac{\partial P(V)}{\partial V} .
\end{aligned}
\end{equation}
This corresponds to the port-Hamiltonian system formulation \eqref{eq:pHsystem} with nonlinear dissipation, where $x=\phi$, $u=V$, $y=I$, $J=0$, $G=1$, and $M=0$.
\subsection{RC network with backlash ferromagnet inductor}
Let us now consider an RLC network consisting of a  backlash inductor interconnected to a linear resistor of resistance $R$ and a linear capacitor of capacitance $C$. We derive the port-Hamiltonian formulation for both the series and parallel interconnection configurations as shown in Fig.~\ref{fig:RLCbacklashparralel} and Fig.~\ref{fig:RLCbacklashseries}, respectively. The total Hamiltonian of the network is given as
\begin{equation}\label{eq:hamiltonian}
H(\phi, Q) := H_{bf}(\phi) + H_C(Q),
\end{equation}
where $H_{bf}$ is as in \eqref{eq:Hbf} and $H_C(Q)=\frac{1}{2C}Q^2$ is the energy stored in the capacitor with charge $Q$. This Hamiltonian holds for both the parallel and series configurations of the network. 

\begin{figure}[b]
    \centering
    \begin{circuitikz}[american]
        \draw
            (0,0) to[isource, l=$I$] (0,3)
            (0,3) -- (6,3)
            (0,0) -- (6,0)
            (2,3) to[R, l=$R$] (2,0)
            (4,3) to[L, l=$L$] (4,0)
            (6,3) to[C, l=$C$] (6,0);
    \end{circuitikz}
    \caption{Parallel interconnected resistor, ferromagnetic backlash and capacitor circuit.}
    \label{fig:RLCbacklashparralel}
\end{figure}
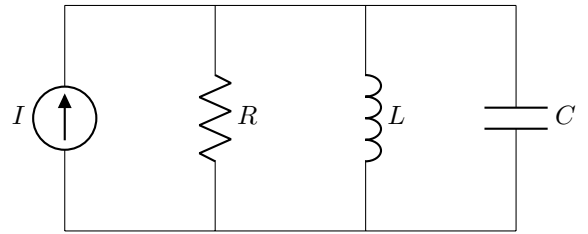
\subsubsection{Parallel interconnection} 
In the parallel configuration, as shown in Fig.~\ref{fig:RLCbacklashparralel}, all elements share the same voltage $V$. The source is a current source with input $u = I$ and output $y = V$, so that the supply rate is $I V$. It can be computed that the corresponding pH formulation is given by
\begin{equation}
\begin{split}
\begin{bmatrix}
    \dot{\phi}\\
    \dot{Q}
\end{bmatrix}&=\begin{bmatrix}
    0 & 1\\
    -1 & -\frac{1}{R}
\end{bmatrix}\begin{bmatrix}
    \frac{\partial H}{\partial \phi}\\
    \frac{\partial H}{\partial Q}
\end{bmatrix}-\begin{bmatrix}
    0\\
    \frac{\partial P}{\partial V} 
\end{bmatrix}+\begin{bmatrix}
    0\\
    1
\end{bmatrix} I,\\
V&=\begin{bmatrix}
    0 & 1
\end{bmatrix}
\begin{bmatrix}
\frac{\partial H}{\partial \phi}\\
\frac{\partial H}{\partial Q}
\end{bmatrix},
\end{split}
\end{equation}
where $P$ is given by \eqref{eq:nonlineardissipation}. This system corresponds to the pH system \eqref{eq:pHsystem}, with state $x=\begin{bmatrix}
    \phi & Q
\end{bmatrix}^\top$ with 
\[
J = \begin{bmatrix}0&1\\-1&-\frac{1}{R}\end{bmatrix}, \quad
G = \begin{bmatrix}0\\1\end{bmatrix}, \quad
M = 0.
\]

The energy balance follows immediately as
\begin{equation}
\dot{H} = -\frac{V^2}{R} - V\cdot\frac{h}{L}\sign(V) + V I \leq V I,
\end{equation}
which shows the passivity of the RLC network with respect to the port $(I, V)$. The two dissipation terms $\frac{V^2}{R}\geq0$ and $V\cdot\frac{h}{L}\sign(V)\geq0$ represent resistive and hysteretic losses, respectively. 

\subsubsection{Series interconnection}

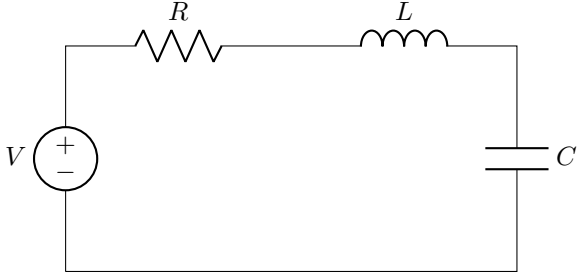
\begin{figure}[h]
    \centering
        \begin{circuitikz}[american]
            \draw
                (0,0) to[vsource, l=$V$, invert] (0,3)
                to[R, l=$R$] (3,3)
                to[L, l=$L$] (6,3)
                to[C, l=$C$] (6,0)
                -- (0,0);
\end{circuitikz}
\caption{Series interconnected resistor, ferromagnetic backlash and capacitor circuit.}
    \label{fig:RLCbacklashseries}
\end{figure}
We now consider the series configuration as shown in Fig.~\ref{fig:RLCbacklashseries}, where all elements share the same current $I$. In this case, the corresponding pH formulation is given by
\begin{equation}
\begin{aligned}
\begin{bmatrix}
    \dot{\phi}+R\frac{h}{L}\sign(V)\\
    \dot{Q}-\frac{h}{L}\sign(V)
\end{bmatrix} &=    
\begin{bmatrix}
-R & -1\\
1 & 0
\end{bmatrix}\begin{bmatrix}
    \frac{\partial H}{\partial \phi}\\
    \frac{\partial H}{\partial Q}
\end{bmatrix}+\begin{bmatrix}
    1\\
    0
\end{bmatrix}V,\\
I&=\begin{bmatrix}
    1 & 0
\end{bmatrix}\begin{bmatrix}
    \frac{\partial H}{\partial \phi}\\
    \frac{\partial H}{\partial Q}
\end{bmatrix}+\frac{\partial P(V)}{\partial V}.
\end{aligned}
\end{equation}
The resulting system is an implicit pH system, as the hysteretic backlash term introduces a multivalued, rate-independent dissipative relation acting through the state derivative. Nevertheless, the skew-symmetric interconnection structure and the dissipation potential preserve the standard pH power balance, and passivity with respect to the port variables $(V,I)$ follows from \eqref{eq:bf_energy}. Indeed, computing the time derivative of \eqref{eq:hamiltonian} gives
\begin{equation}
\dot{H} = \Big(I - \frac{h}{L}\sign(V)\Big)V + \frac{Q}{C}\dot{Q} \leq IV,
\end{equation}
since $\frac{h}{L}\sign(V)V \geq 0$. Hence both configurations are passive with respect to the port variables $(V,I)$.

\section{Conclusion}
In this paper, we have shown that the backlash inductor element 
admits a family of storage functions that are lower-bounded by the available storage and upper-bounded by the required supply, as illustrated in Fig.~\ref{fig:admissible_storage_function}. The  backlash inductor element can then be expressed in a pH framework with a nonlinear dissipation feedthrough term, allowing the system to be written in standard nonlinear pH form, while preserving the skew-symmetric interconnection structure and energy balance. The formulation is subsequently extended to both parallel and series RC network configurations incorporating the backlash inductor. For the resulting network, passivity with respect to the supply rate is established using the total Hamiltonian. The generalization of the pH modeling of the backlash storage elements to other hysteretic storage elements using complex hysteresis operators, such as nonlinear backlash, Preisach or Duhem models, will be reported in our future works.  
\bibliographystyle{IEEEtran}

\bibliography{References}
\end{document}